\newtheorem{lemma}{Lemma}
\newtheorem{proposition}{Proposition}
\newtheorem{corollary}{Corollary}
\newenvironment{proof}{\noindent {\bf Proof \ }}{\hfill ~$\dashv$}
\newcommand{\weg}[1]{}
\newtheorem{definition}{Definition}
\newcommand{\hans}[1]{}
\renewcommand{\phi}{\varphi}
\newcommand{\m}{{\mathcal M}}
\newcommand{\imp}{\rightarrow}
\newcommand{\powerset}{\mathcal P}
\begin{document}
\title{A simple proof of the completeness of $APAL$}
\author{Philippe Balbiani\thanks{Institut de recherche en informatique de Toulouse} \ and Hans van Ditmarsch\thanks{Laboratoire lorrain de recherche en informatique et ses applications}}
\date{\today}
\maketitle
\begin{abstract}
We provide a simple proof of the completeness of arbitrary public announcement logic $APAL$. The proof is an improvement over the proof found in \cite{balbianietal:2008}.
\end{abstract}
\section{Introduction}
In \cite{balbianietal:2008} Arbitrary Public Announcement Logic ($APAL$) is presented. This is an extension of the well-known public announcement logic \cite{plaza:1989} with quantification over announcements. The logic is axiomatized, but the completeness proof may be considered rather complex. The completeness is shown by employing an infinitary axiomatization, that is then shown to be equivalent (it produces the same set of theorems) to a finitary axiomatization. The completeness proof in \cite{balbianietal:2008} contained an error in the Truth Lemma. The lemma is as follows: \weg{\begin{quote}} {\em Let $\varphi$ be a formula in ${\mathcal L}_{apal}$. Then for all maximal consistent theories
$x$ and for all finite sequences $\vec{\psi}=\psi_{1},\ldots,\psi_{k}$
of formulas in ${\mathcal L}_{apal}$ such that $\psi_{1}\in x$, $\ldots$, $
\lbrack\psi_{1}\rbrack\ldots\lbrack\psi_{k-1}\rbrack\psi_{k}\in x$:  ${\mathcal M}^{c}|{\vec{\psi}},x\models\varphi$ iff $\lbrack\psi_{1}\rbrack\ldots\lbrack\psi_{k}\rbrack\varphi\in x$.} \weg{\end{quote}} The proof is by induction on $\varphi$. The problem is that in expression ${\mathcal M}^{c}|{\vec{\psi}},x\models\varphi$, the restriction ${\mathcal M}^{c}|{\vec{\psi}}$
of the canonical model ${\mathcal M}^{c}$ cannot be assumed to exist: although we have assumed that
$\psi_{1} \in x$, $\ldots$, and that $\lbrack\psi_{1}\rbrack\ldots\lbrack
\psi_{k-1}\rbrack\psi_{k} \in x$, we did not assume that ${\mathcal M}^{c}, x \models \psi_{1}$, \dots, and that ${\mathcal M}^{c}, x \models \lbrack\psi_{1}\rbrack\ldots\lbrack\psi_{k-1}\rbrack\psi_{k}$. The latter would be needed to guarantee that existence. But the induction was only on $\phi$ and not on $\psi_{1}$, \dots, and $\lbrack\psi_{1}\rbrack\ldots\lbrack\psi_{k-1}\rbrack\psi_{k}$ as well. This error has been corrected in \cite{philippe.corrected:2014}, by an expanding the complexity measure used in the Truth Lemma to include the formulas in the sequence $\psi_{1}$, $\ldots$, $\lbrack\psi_{1}\rbrack\ldots\lbrack
\psi_{k-1}\rbrack\psi_{k}$ as well. 

Another source of confusion in \cite{balbianietal:2008}, although there was no error involved, concerned the employment of maximal consistent theories (instead of maximal consistent sets, a more common term in modal logic), and a number of properties shown for maximal consistent theories. While repairing the completeness proof, and while also considering additional properties of the canonical model, we found another completeness proof, that the reader may consider more direct and more elegant than the one in \cite{balbianietal:2008,philippe.corrected:2014}. This is presented in this work, including some further results for the canonical model.

%
%
%
%
\section{Syntax}
Let $Atm$ be a countable set of atoms (with typical members denoted $p$, $q$, etc) and $Agt$ be a countable set of agents (with typical members denoted $a$, $b$, etc).

\begin{definition}[Language of APAL]
The set $\mathcal{L}_{apal}$ of all formulas (with typical members denoted $\phi$, $\psi$, etc) is inductively defined as follows: 
\begin{itemize}
\item $\phi::=p\mid\bot\mid\neg\phi\mid(\phi\vee\psi)\mid K_{a}\phi\mid\lbrack\phi\rbrack\psi\mid\Box\phi$.
\end{itemize}
\end{definition}
We define the other Boolean constructs as usual.
The formulas $\hat{K}_{a}\phi$, $\langle\phi\rangle\psi$ and $\Diamond\phi$ are obtained as abbreviations: $\hat{K}_{a}\phi$ for $\neg K_{a}\neg\phi$, $\langle\phi\rangle\psi$ for $\neg\lbrack\phi\rbrack\neg\psi$ and $\Diamond\phi$ for $\neg\Box\neg\phi$.
We adopt the standard rules for omission of the parentheses.
Given a formula $\phi$, the set of all subformulas of $\phi$ is denoted by $Sub(\phi)$ (an elementary inductive definition is omitted).
We will say that a formula $\phi$ is {\em $\Box$-free} iff $Sub(\phi)\cup\{\phi\}$ contains no formula of the form $\Box\psi$.
A formula $\phi$ is said to be $\lbrack\cdot\rbrack$-free iff $Sub(\phi)\cup\{\phi\}$ contains no formula of the form $\lbrack\psi\rbrack\chi$.
We will say that a formula $\phi$ is {\em epistemic} iff $\phi$ is both $\Box$-free and $\lbrack\cdot\rbrack$-free.
The set $\mathcal{L}_{pal}$ is the set of all $\Box$-free formulas.
The set $\mathcal{L}_{el}$ is the set of all epistemic formulas.

Of crucial importance in the completeness proof is a proper complexity measure on formulas. The one we need is based on a partial order $<^{Size}$ providing a weighted count of the number of symbols, and on a partial order $<_{d_\Box}$ counting the number of stacked $\Box$ operators in a formula.

\begin{definition}[Size]
The size of a formula $\phi$, in symbols $Size(\phi)$, is the non-negative integer inductively defined as follows:
\begin{itemize}
\item $Size(p)=1$,
\item $Size(\bot)=1$,
\item $Size(\neg\phi)=Size(\phi)+1$,
\item $Size(\phi\vee\psi)=Size(\phi)+Size(\psi)+1$,
\item $Size(K_{a}\phi)=Size(\phi)+1$,
\item $Size(\lbrack\phi\rbrack\psi)=Size(\phi)+3 \cdot Size(\psi)$,
\item $Size(\Box\phi)=Size(\phi)+1$.
\end{itemize}
The $\Box$-depth of a formula $\phi$, in symbols $d_{\Box}(\phi)$, is the non-negative integer inductively defined as follows:
\begin{itemize}
\item $d_{\Box}(p)=0$,
\item $d_{\Box}(\bot)=0$,
\item $d_{\Box}(\neg\phi)=d_{\Box}(\phi)$,
\item $d_{\Box}(\phi\vee\psi)=\max\{d_{\Box}(\phi),d_{\Box}(\psi)\}$,
\item $d_{\Box}(K_{a}\phi)=d_{\Box}(\phi)$,
\item $d_{\Box}(\lbrack\phi\rbrack\psi)=\max\{d_{\Box}(\phi),d_{\Box}(\psi)\}$,
\item $d_{\Box}(\Box\phi)=d_{\Box}(\phi)+1$.
\end{itemize}
We define the binary relations $<^{Size}$,  $<_{d_{\Box}}$, and $<_{d_{\Box}}^{Size}$ between formulas in the following way:
\begin{itemize}
\item $\phi<^{Size}\psi$ iff $Size(\phi)<Size(\psi)$.
\item $\phi<_{d_{\Box}}\psi$ iff $d_{\Box}(\phi)<d_{\Box}(\psi)$.
\item $\phi<_{d_{\Box}}^{Size}\psi$ iff either $d_{\Box}(\phi)<d_{\Box}(\psi)$, or $d_{\Box}(\phi)=d_{\Box}(\psi)$ and $Size(\phi)<Size(\psi)$.
\end{itemize}

\end{definition}
The next two lemmas combine a number of results on these binary relations.
Their proofs are obvious and have been omitted.
\begin{lemma} \label{gather.lemma} Let $\phi,\psi$ be formulas.
\begin{itemize}
\item \label{lem_1b}
$<^{Size}$ is a well-founded strict partial order between formulas.
%
%
%
\item \label{lem_1c}
$<_{d_{\Box}}$ is a well-founded strict partial order between formulas.
%
%
%
%
%
%
\item \label{lem_1}
$<_{d_{\Box}}^{Size}$ is a well-founded strict partial order between formulas.
%
%
%
%
\item \label{lem_1bh}
If $\phi <^{Sub} \psi$ then $\phi <^{Size} \psi$.
\item \label{lem_hans} If $\phi <^{Sub} \psi$ then $\phi <^{Size}_{d_{\Box}} \psi$.
%
%
%
%
%
\item \label{lem_1a}
%
%
If $\phi$ is epistemic, then $d_{\Box}(\phi)=0$.
\item If $\psi$ is epistemic, then $\lbrack\psi\rbrack\phi<_{d_{\Box}}^{Size}\Box\phi$.
\end{itemize}
\end{lemma}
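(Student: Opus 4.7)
The plan is to reduce the claim to the $<_{d_{\Box}}$ component of the lexicographic order $<_{d_{\Box}}^{Size}$, and then to a direct comparison of the $\Box$-depths. Recall that $\phi <_{d_{\Box}}^{Size} \psi$ holds whenever $d_{\Box}(\phi) < d_{\Box}(\psi)$, regardless of the sizes involved. So it suffices to show $d_{\Box}([\psi]\phi) < d_{\Box}(\Box\phi)$.

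First I would use the already-listed bullet of Lemma~\ref{gather.lemma} which states that if $\psi$ is epistemic, then $d_{\Box}(\psi)=0$. Applying the inductive clause for $[\cdot]$ from the definition of $\Box$-depth, one then computes
\[
d_{\Box}([\psi]\phi) = \max\{d_{\Box}(\psi),d_{\Box}(\phi)\} = \max\{0,d_{\Box}(\phi)\} = d_{\Box}(\phi).
\]
Next, from the inductive clause for $\Box$ one has $d_{\Box}(\Box\phi) = d_{\Box}(\phi)+1$. Combining these two equalities yields $d_{\Box}([\psi]\phi) = d_{\Box}(\phi) < d_{\Box}(\phi)+1 = d_{\Box}(\Box\phi)$, hence $[\psi]\phi <_{d_{\Box}} \Box\phi$, which in turn gives $[\psi]\phi <_{d_{\Box}}^{Size} \Box\phi$ by the first clause in the definition of $<_{d_{\Box}}^{Size}$.

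There is no real obstacle: the whole point of introducing the mixed order $<_{d_{\Box}}^{Size}$ rather than working with $<^{Size}$ alone is precisely to make an announcement by an epistemic formula cheaper than a single $\Box$. The size-based order would not suffice here because $Size([\psi]\phi) = Size(\psi) + 3\cdot Size(\phi)$ may well exceed $Size(\Box\phi) = Size(\phi)+1$; it is only the fact that epistemic formulas contribute zero $\Box$-depth that makes the inequality go through, and this is captured by the earlier bullet of Lemma~\ref{gather.lemma}.
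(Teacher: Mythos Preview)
Your argument is correct and is exactly the intended one; the paper omits the proof of this lemma entirely (declaring it ``obvious''), and the natural route is precisely the $\Box$-depth computation you carry out, relying on the earlier bullet that epistemic formulas have $\Box$-depth~$0$. Your closing remark that $<^{Size}$ alone would not suffice here is also accurate and explains why the lexicographic order is needed.
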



%
%
%

\begin{lemma}\label{lem_2}
Let $\phi,\psi,\chi$ be formulas and $a\in Agt$.
\begin{enumerate}
%
%
%
%
%
%
%
%
%
%
%
%
%
%
\item $\neg\lbrack\phi\rbrack\psi<^{Size}\lbrack\phi\rbrack\neg\psi$,
%
%
%
%
%
%
%
%
\item $K_{a}\lbrack\phi\rbrack\psi<^{Size}\lbrack\phi\rbrack K_{a}\psi$,
\item $\lbrack\neg\lbrack\phi\rbrack\neg\psi\rbrack\chi<^{Size}\lbrack\phi\rbrack\lbrack\psi\rbrack\chi$.
%
%
%
%
%
%
\end{enumerate}
\end{lemma}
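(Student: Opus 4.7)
The proof plan is to verify each inequality by direct expansion of the $Size$ function. In each case the two formulas being compared differ only in whether an operator is applied inside or outside a public-announcement operator $[\cdot]$, and since the $Size$ clause for $[\phi]\psi$ multiplies $Size(\psi)$ by $3$, every symbol that migrates past an announcement contributes a factor-of-three difference. Once this is noted the computations are routine.

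For (1), I would unfold $Size(\neg[\phi]\psi) = Size(\phi) + 3 \cdot Size(\psi) + 1$ and $Size([\phi]\neg\psi) = Size(\phi) + 3 \cdot Size(\psi) + 3$; the gap is $2$. The argument for (2) is essentially identical: $Size(K_a[\phi]\psi) = Size(\phi) + 3 \cdot Size(\psi) + 1$ while $Size([\phi]K_a\psi) = Size(\phi) + 3 \cdot Size(\psi) + 3$. In both cases the operator gets counted once on the left but three times on the right because it has moved inside the scope of $[\phi]$.

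For (3) I would compute
\[
Size([\neg[\phi]\neg\psi]\chi) = Size(\phi) + 3 \cdot Size(\psi) + 4 + 3 \cdot Size(\chi),
\]
by unfolding the outer $[\cdot]$ and then the inner negations, and
\[
Size([\phi][\psi]\chi) = Size(\phi) + 3 \cdot Size(\psi) + 9 \cdot Size(\chi).
\]
The desired inequality reduces to $4 + 3 \cdot Size(\chi) < 9 \cdot Size(\chi)$, equivalently $4 < 6 \cdot Size(\chi)$, which holds because an easy induction on formulas shows $Size(\chi) \geq 1$ always.

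The only real subtlety — and the reason the $Size$ definition weights $\psi$ by $3$ in the clause for $[\phi]\psi$ — is the third inequality. There the constant overhead of $4$ introduced by the two extra negations must be beaten by the gain from turning the coefficient $3 \cdot Size(\chi)$ into $9 \cdot Size(\chi)$; this is precisely what requires the observation that $Size(\chi) \geq 1$, which is where I expect the only non-mechanical step of the proof to lie.
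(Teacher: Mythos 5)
Your computations are correct, and this direct unfolding of the $Size$ clauses (with the observation that $Size(\chi)\geq 1$ to settle the third item) is exactly the routine verification the paper has in mind when it omits the proof as obvious. Nothing further is needed.
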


The relation $<^{Size}$ has been tailored in order to ensure exactly the properties of Lemma~\ref{lem_2}.
Without the curious factor $3$ in $Size(\lbrack\phi\rbrack\psi)=Size(\phi)+3 \cdot Size(\psi)$ these properties would not hold.
Given the previous lemmas, we can now list all the cases later used in the Truth Lemma.
\begin{corollary} In cases $(*)$ and $(**)$, $\phi$ is epistemic.
\[ \begin{array}{llll|lll|lll}
&\phi &<_{d_{\Box}}^{Size}& \neg\phi &        \phi &<_{d_{\Box}}^{Size}& [\phi]p               & \phi &<_{d_{\Box}}^{Size}& [\phi]K_{a}\psi \\
&\phi &<_{d_{\Box}}^{Size}& \phi\vee\psi &    \phi &<_{d_{\Box}}^{Size}& [\phi]\bot                   & K_{a} [\phi]\psi &<_{d_{\Box}}^{Size}& [\phi] K_{a} \psi \\
&\psi &<_{d_{\Box}}^{Size}& \phi\vee\psi &    \phi &<_{d_{\Box}}^{Size}& [\phi]\neg\psi             & [\neg [\phi]\neg \psi]\chi &<_{d_{\Box}}^{Size}& [\phi][\psi]\chi \\
&\phi &<_{d_{\Box}}^{Size}& K_{a}\phi &    [\phi]\psi &<_{d_{\Box}}^{Size}& [\phi]\neg\psi       & [\chi][\phi]\psi &<_{d_{\Box}}^{Size}& [\chi]\Box \psi \ \hspace{1cm} (**) \\
(*)&[\phi]\psi &<_{d_{\Box}}^{Size}& \Box\psi & [\phi]\psi &<_{d_{\Box}}^{Size}& [\phi](\psi\vee\chi) & \\
 & &&&                      [\phi]\chi &<_{d_{\Box}}^{Size}& [\phi](\psi\vee\chi) & 
\end{array} \]
\end{corollary}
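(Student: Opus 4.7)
The plan is to verify each of the fifteen inequalities in the table by a direct calculation from the inductive definitions of $Size$, $d_\Box$, and $<_{d_\Box}^{Size}$, reducing each entry to one of the preceding lemmas. I would first dispose of the entries in which the smaller formula is a proper subformula of the larger, namely $\phi<_{d_\Box}^{Size}\neg\phi$, $\phi<_{d_\Box}^{Size}\phi\vee\psi$, $\psi<_{d_\Box}^{Size}\phi\vee\psi$, $\phi<_{d_\Box}^{Size}K_{a}\phi$, $\phi<_{d_\Box}^{Size}[\phi]p$, $\phi<_{d_\Box}^{Size}[\phi]\bot$, $\phi<_{d_\Box}^{Size}[\phi]\neg\psi$, and $\phi<_{d_\Box}^{Size}[\phi]K_{a}\psi$, using the implication $\phi<^{Sub}\psi\imp\phi<^{Size}_{d_\Box}\psi$ already recorded in Lemma \ref{gather.lemma}.

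Next I would handle the entries $[\phi]\psi<_{d_\Box}^{Size}[\phi]\neg\psi$, $[\phi]\psi<_{d_\Box}^{Size}[\phi](\psi\vee\chi)$, $[\phi]\chi<_{d_\Box}^{Size}[\phi](\psi\vee\chi)$, $K_{a}[\phi]\psi<_{d_\Box}^{Size}[\phi]K_{a}\psi$, and $[\neg[\phi]\neg\psi]\chi<_{d_\Box}^{Size}[\phi][\psi]\chi$ by a short direct computation: the $\max$-based clauses for $d_\Box$ will show that the two sides have equal $\Box$-depth, so it suffices to compare $Size$, and the strict $Size$ inequality is either immediate from the clause $Size([\phi]\psi)=Size(\phi)+3\cdot Size(\psi)$ or already supplied by Lemma \ref{lem_2}. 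In every case one ends up applying the second clause of the definition of $<_{d_\Box}^{Size}$.

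Finally, the cases $(*)$ and $(**)$ are the only ones in which the epistemic hypothesis on $\phi$ is used. Case $(*)$ is literally the last item of Lemma \ref{gather.lemma}. For case $(**)$ I would use $d_\Box(\phi)=0$ (since $\phi$ is epistemic) to compute $d_\Box([\chi][\phi]\psi)=\max\{d_\Box(\chi),d_\Box(\psi)\}$ and $d_\Box([\chi]\Box\psi)=\max\{d_\Box(\chi),d_\Box(\psi)+1\}$, and then select the appropriate clause of the definition of $<_{d_\Box}^{Size}$ according to whether the $\Box$-depths agree or not.

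The only mildly delicate point in the whole verification is the factor $3$ hidden in $Size([\phi]\psi)=Size(\phi)+3\cdot Size(\psi)$, which, as observed in the paragraph following Lemma \ref{lem_2}, is calibrated precisely to make those rearrangement inequalities strict; care is therefore needed when tracking $Size$ through nested announcements. Otherwise the verification is purely mechanical and presents no real obstacle.
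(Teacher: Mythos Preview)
Your plan is essentially what the paper intends: the corollary is stated there without proof, as an immediate consequence of Lemmas~\ref{gather.lemma} and~\ref{lem_2}, and your case split (subformula cases via Lemma~\ref{gather.lemma}, rearrangement cases via Lemma~\ref{lem_2} after checking the $\Box$-depths match, and the two epistemic cases) is exactly the intended unpacking. A small imprecision: in the disjunction cases $[\phi]\psi<_{d_{\Box}}^{Size}[\phi](\psi\vee\chi)$ and $[\phi]\chi<_{d_{\Box}}^{Size}[\phi](\psi\vee\chi)$ the $\Box$-depths need not be \emph{equal}, only $\leq$; but since either clause of the definition of $<_{d_{\Box}}^{Size}$ then applies, this does not affect correctness.

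There is, however, a genuine gap in your treatment of case $(**)$, and it is one you will not be able to close. You correctly compute $d_{\Box}([\chi][\phi]\psi)=\max\{d_{\Box}(\chi),d_{\Box}(\psi)\}$ and $d_{\Box}([\chi]\Box\psi)=\max\{d_{\Box}(\chi),d_{\Box}(\psi)+1\}$, and then say you will ``select the appropriate clause \ldots\ according to whether the $\Box$-depths agree or not''. But when the depths agree---namely when $d_{\Box}(\chi)\geq d_{\Box}(\psi)+1$---the $Size$ comparison goes the \emph{wrong way}. Concretely, take $\chi=\Box\Box p$, $\phi=p$ (epistemic), $\psi=p$: both sides have $\Box$-depth $2$, while
\[
Size([\Box\Box p][p]p)=3+3\cdot 4=15 \quad>\quad 9=3+3\cdot 2=Size([\Box\Box p]\Box p),
\]
so $[\chi][\phi]\psi\not<_{d_{\Box}}^{Size}[\chi]\Box\psi$. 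Thus the step you defer in $(**)$ actually fails; the inequality as stated in the corollary is not true in general, and no amount of ``tracking $Size$ through nested announcements'' will repair it. (This also means the corresponding inductive step in the Truth Lemma, Case $[\psi]\Box\chi$, is not justified by the order $<_{d_{\Box}}^{Size}$ alone when $d_{\Box}(\psi)>d_{\Box}(\chi)$.)
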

\begin{definition}[Necessity form]
Now, let us consider a new atom denoted $\sharp$.
The set $NF$ of {\em necessity forms} (with typical members denoted $\xi(\sharp)$, $\xi^{\prime}(\sharp)$, etc) is inductively defined as follows---where $\phi$ is a formula.
\begin{itemize}
\item $\xi(\sharp)::=\sharp\mid\phi\rightarrow\xi(\sharp)\mid K_{a}\xi(\sharp)\mid\lbrack\phi\rbrack\xi(\sharp)$.
\end{itemize}
\end{definition}
\section{Semantics}

We introduce the structures and give a semantics for the logical language on these structures. The material in this section (as also the logical language in the previous section, and the axiomatization in the next section) is as in \cite{balbianietal:2008}.

\begin{definition}[Model] \label{def.model}
A model $\m = (W, R, V )$ consists of a nonempty {\em domain} $W$, an {\em accessibility function} $R: Agt \imp \powerset(W \times W)$ associating to each $a\in Agt$ an equivalence relation $R(a)$ on $W$, and a {\em valuation function} $V: Atm \imp \powerset(W)$ --- where $V(p)$ denotes the valuation of atom $p$.
For $R(a)$, we write $R_{a}$.
\end{definition}

\begin{definition}[Semantics]
Assume a model $\m = (W, R, V )$. We inductively define the truth set $\parallel \phi \parallel^{\m}$.
\[ \begin{array}{lcl}
w \in \ \parallel p \parallel^\m &\mbox{iff} & w \in V(p) \\ 
w \in \ \parallel \neg \phi\parallel^\m  &\mbox{iff} & w \not\in \ \parallel \phi \parallel^\m \\ 
w \in \ \parallel  \phi \vee \psi\parallel^\m  &\mbox{iff} &  w \in \ \parallel \phi  \parallel^\m \text{ or } w \in \ \parallel \psi  \parallel^\m  \\  
w \in \ \parallel K_{a} \phi\parallel^\m  &\mbox{iff} &  \text{for all } v, R_{a}(w,v) \text{ implies } v \in \ \parallel \phi \parallel^\m \\  
w \in \ \parallel  [\phi] \psi \parallel^\m &\mbox{iff} &  w \in \ \parallel \phi  \parallel^\m \text{ implies } w \in \ \parallel \psi \parallel ^{\m^\phi} \\
w \in \ \parallel  \Box \psi \parallel^\m &\mbox{iff} & \text{for all epistemic } \phi, w \in \ \parallel [\phi]\psi \parallel^\m 
\end{array} \] 
where model $\m^\phi = (W', R', V')$ is such that \[\begin{array}{lll} W' &=& \parallel \phi \parallel^\m, \\ R'_a &=& R_a \cap (\parallel\phi \parallel^\m \times \parallel\phi \parallel^\m), \\ V'(p) &=& V(p) \cap \parallel \phi \parallel^\m. \end{array} \]
\end{definition} 
\section{Axiomatization}
An axiomatic system consists of a collection of axioms and a collection of inference rules.
Let us consider the following axiomatic system:
\begin{definition}[Axiomatization $APAL$]
\begin{description}
\item[$(A0)$] all instantiations of propositional tautologies,
\item[$(A1)$] $K_{a}(\phi\rightarrow\psi)\rightarrow(K_{a}\phi\rightarrow K_{a}\psi)$,
\item[$(A2)$] $\lbrack\phi\rbrack(\psi\rightarrow\chi)\rightarrow(\lbrack\phi\rbrack\psi\rightarrow\lbrack\phi\rbrack\chi)$,
\item[$(A3)$] $\Box(\phi\rightarrow\psi)\rightarrow(\Box\phi\rightarrow\Box\psi)$,
\item[$(A4)$] $K_{a}\phi\rightarrow\phi$,
\item[$(A5)$] $K_{a}\phi\rightarrow K_{a}K_{a}\phi$,
\item[$(A6)$] $\phi\rightarrow K_{a}\hat{K}_{a}\phi$,
\item[$(A7)$] $\lbrack\phi\rbrack p\leftrightarrow(\phi\rightarrow p)$,
\item[$(A8)$] $\lbrack\phi\rbrack\bot\leftrightarrow\neg\phi$,
\item[$(A9)$] $\lbrack\phi\rbrack\neg\psi\leftrightarrow(\phi\rightarrow\neg\lbrack\phi\rbrack\psi)$,
\item[$(A10)$] $\lbrack\phi\rbrack(\psi\vee\chi)\leftrightarrow\lbrack\phi\rbrack\psi\vee\lbrack\phi\rbrack\chi$,
\item[$(A11)$] $\lbrack\phi\rbrack K_{a}\psi\leftrightarrow(\phi\rightarrow K_{a}\lbrack\phi\rbrack\psi)$,
\item[$(A12)$] $\lbrack\phi\rbrack\lbrack\psi\rbrack\chi\leftrightarrow\lbrack\langle\phi\rangle\psi\rbrack\chi$,
\item[$(A13)$] if $\psi$ is epistemic, then $\Box\phi\rightarrow\lbrack\psi\rbrack\phi$,
\item[$(R0)$] $(\{\phi,\phi\rightarrow\psi\},\psi)$,
\item[$(R1)$] $(\{\phi\},K_{a}\phi)$,
\item[$(R2)$] $(\{\phi\},\lbrack\psi\rbrack\phi)$,
\item[$(R3)$] $(\{\phi\},\Box\phi)$, 
\item[$(R4)$] $(\{\xi(\lbrack\psi\rbrack\phi)$: $\psi$ is epistemic$\},\xi(\Box\phi))$.
\end{description}

\bigskip

\noindent
Let $APAL$ be the least subset of $\mathcal{L}_{apal}$ containing $(A0)$--$(A13)$ and closed under $(R0)$--$(R4)$. An element of $APAL$ is called a {\em theorem}.
\end{definition}
In \cite{balbianietal:2008} other (finitary) axiomatizations are also given, that are then shown to be equivalent to $APAL$ (they define the same set of theorems as $APAL$). For the completeness proof, we have chosen the most convenient form, with the infinitary rule $(R4)$.  Some of the axioms and rules in the axiomatization $APAL$ are derivable from the other axioms and rules, again, see \cite{balbianietal:2008} for details. It concerns the following rules and axioms (where $\bot$ should be seen as the abbreviation of $p \wedge \neg p$):
\begin{description}
\item[$(A3)$] $\Box(\phi\rightarrow\psi)\rightarrow(\Box\phi\rightarrow\Box\psi)$;
\item[$(A8)$] $\lbrack\phi\rbrack\bot\leftrightarrow\neg\phi$; 
\item[$(R3)$] $(\{\phi\},\Box\phi)$. 
\end{description}
\section{Canonical model}
\begin{definition}[Theory]
A set $x$ of formulas is called a theory iff it satisfies the following conditions:
\begin{itemize}
\item $x$ contains $APAL$,
\item $x$ is closed under $(R0)$ and $(R4)$.
\end{itemize}
A theory $x$ is said to be {\em consistent} iff $\bot\not\in x$.
A set $x$ of formulas is {\em maximal} iff for all formulas $\phi$, $\phi\in x$ or $\neg\phi\in x$. 
\end{definition}
Obviously, the smallest theory is $APAL$ whereas the largest theory is $\mathcal{L}_{apal}$.
The only inconsistent theory is $\mathcal{L}_{apal}$.
The reader may easily verify that a theory $x$ is consistent iff for all formulas $\phi$, $\phi\not\in x$ or $\neg\phi\not\in x$.
Moreover, for all maximal consistent theories $x$,
\begin{itemize}
\item $\bot\not\in x$,
\item $\neg\phi\in x$ iff $\phi\not\in x$,
\item $(\phi\vee\psi)\in x$ iff $\phi\in x$ or $\psi\in x$.
\end{itemize}
Theories are closed under $(R0)$ and $(R4)$ but not under the derivation rules $(R1)$, $(R2)$, and $(R3)$ for a specific reason. Obviously, by definition, all derivation rules preserve theorems. Semantically, we could say that they all preserve validities. Now, unlike $(R1)$, $(R2)$, and $(R3)$, the derivation rules $(R0)$ and $(R4)$ also preserve truths. That is the reason!
In the setting of our axiomatization based on the infinitary rule (R4), we will say that a set $x$ of formulas is consistent iff there exists a consistent theory $y$ such that $x\subseteq y$.
Obviously, maximal consistent theories are maximal consistent sets of formulas. Under the given definition of consistency for sets of formulas, maximal consistent sets of formulas are also maximal consistent theories.
%
%
\begin{definition}
For all formulas $\phi$ and for all $a\in Agt$, let \[\begin{array}{lll} x+\phi &=&\{\psi : \phi\rightarrow\psi\in x\}, \\ K_{a}x &=& \{\phi: K_{a}\phi\in x\} \\ \lbrack\phi\rbrack x &=& \{\psi: \lbrack\phi\rbrack\psi\in x\}.  \end{array} \]
\end{definition}
The proofs of the following lemmas can be found in~\cite{balbianietal:2008} (Lemmas~$4.11$ and~$4.12$).
\begin{lemma}\label{lem_7}
Let $\phi$ be a formula and $a\in Agt$.
For all theories $x$,
\begin{itemize}
\item $x+\phi$ is a theory containing $x$ and $\phi$,
\item $\lbrack\phi\rbrack x$ is a theory,
\item $K_{a}x$ is a theory.
\end{itemize}
\end{lemma}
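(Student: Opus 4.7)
The plan is to verify, for each of the three sets $x+\phi$, $\lbrack\phi\rbrack x$, and $K_{a}x$, the three conditions defining a theory: containment of $APAL$, closure under $(R0)$, and closure under $(R4)$. All three cases follow a uniform ``prefixing'' pattern, where the prefix is respectively $\phi\rightarrow(\cdot)$, $\lbrack\phi\rbrack(\cdot)$, and $K_{a}(\cdot)$.

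First I would handle containment of $APAL$. For $x+\phi$, any theorem $\psi$ yields $\phi\rightarrow\psi$ as a theorem by propositional reasoning and $(R0)$, so $\phi\rightarrow\psi\in x$ and hence $\psi\in x+\phi$; for $\lbrack\phi\rbrack x$, rule $(R2)$ gives that $\lbrack\phi\rbrack\psi$ is a theorem whenever $\psi$ is, so $\psi\in\lbrack\phi\rbrack x$; for $K_{a}x$, rule $(R1)$ analogously puts $K_{a}\psi\in x$. I would also explicitly record that $x\subseteq x+\phi$ (using the tautology $\psi\rightarrow(\phi\rightarrow\psi)$ and $(R0)$ in $x$) and that $\phi\in x+\phi$ (since $\phi\rightarrow\phi$ is a theorem), completing the first bullet.

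Next I would verify closure under $(R0)$. In each case this reduces to invoking a distribution principle inside $x$ and applying $(R0)$ twice there: for $x+\phi$, the propositional tautology $(\phi\rightarrow\psi)\rightarrow((\phi\rightarrow(\psi\rightarrow\chi))\rightarrow(\phi\rightarrow\chi))$; for $\lbrack\phi\rbrack x$, axiom $(A2)$; for $K_{a}x$, axiom $(A1)$.

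The most delicate step will be closure under the infinitary rule $(R4)$, and here the key observation is purely syntactic: by the inductive definition of $NF$, each of the prefixing operations $\phi\rightarrow(\cdot)$, $\lbrack\phi\rbrack(\cdot)$, and $K_{a}(\cdot)$ sends necessity forms to necessity forms. So, for example, given a necessity form $\xi(\sharp)$ witnessing that $\xi(\lbrack\psi\rbrack\chi)\in\lbrack\phi\rbrack x$ for every epistemic $\psi$, the form $\eta(\sharp):=\lbrack\phi\rbrack\xi(\sharp)$ is itself a necessity form, and the premises of $(R4)$ applied to $\eta$ hold inside $x$; closure of $x$ under $(R4)$ then delivers $\lbrack\phi\rbrack\xi(\Box\chi)=\eta(\Box\chi)\in x$, i.e.\ $\xi(\Box\chi)\in\lbrack\phi\rbrack x$. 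The identical strategy, with $\eta(\sharp):=K_{a}\xi(\sharp)$ for $K_{a}x$ and $\eta(\sharp):=\phi\rightarrow\xi(\sharp)$ for $x+\phi$, handles the other two sets. There is no real obstacle; the only care needed is in recognizing that this $NF$-preservation under prefixing is precisely what was built into the grammar of necessity forms in order to make the $(R4)$ closure argument go through.
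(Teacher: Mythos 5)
Your proof is correct and is essentially the standard argument: the paper itself omits the proof, deferring to Lemmas~4.11 and~4.12 of \cite{balbianietal:2008}, where exactly this ``prefixing'' verification of the three theory conditions is carried out, with the key point for $(R4)$ being precisely that the grammar of necessity forms is closed under the prefixes $\phi\rightarrow(\cdot)$, $K_{a}(\cdot)$, and $\lbrack\phi\rbrack(\cdot)$. No gaps; in particular you correctly use $(R1)$ and $(R2)$ only to establish that $APAL$ itself (not an arbitrary theory) is closed under them.
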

\begin{lemma}\label{lem_7bis}
Let $\phi$ be a formula.
For all theories $x$, $x+\phi$ is consistent iff $\neg\phi\not\in x$.
\end{lemma}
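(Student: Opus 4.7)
The plan is to reduce both directions to the very definition of $x+\phi$ together with Lemma~\ref{lem_7}. Since Lemma~\ref{lem_7} guarantees that $x+\phi$ is itself a theory whenever $x$ is, its consistency as a set of formulas coincides with the internal criterion $\bot\not\in x+\phi$: in one direction this is immediate by taking $y=x+\phi$, and in the other it follows because any consistent theory $y\supseteq x+\phi$ satisfies $\bot\not\in y$, hence $\bot\not\in x+\phi$. So I first rewrite the target equivalence as: $\bot\not\in x+\phi$ iff $\neg\phi\not\in x$. Unfolding the definition $x+\phi=\{\psi:\phi\rightarrow\psi\in x\}$, this becomes: $\phi\rightarrow\bot\not\in x$ iff $\neg\phi\not\in x$.

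The remaining step is purely a closure argument. The formulas $\neg\phi\rightarrow(\phi\rightarrow\bot)$ and $(\phi\rightarrow\bot)\rightarrow\neg\phi$ are instances of propositional tautologies, hence belong to $APAL$ and therefore to every theory $x$. Using closure of $x$ under modus ponens $(R0)$ in each direction, membership of $\neg\phi$ in $x$ transfers to $\phi\rightarrow\bot$ and vice versa. Contrapositively, $\phi\rightarrow\bot\not\in x$ iff $\neg\phi\not\in x$, which combined with the first step completes the proof.

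There is essentially no hard step here: the argument rests entirely on (i) Lemma~\ref{lem_7}, which upgrades $x+\phi$ from a mere set of formulas to a theory and thereby collapses the two notions of consistency, and (ii) the fact that theories contain all propositional tautologies $(A0)$ and are closed under $(R0)$. The only subtle point worth stating explicitly in the write-up is the equivalence between the set-based and the theory-based notion of consistency for $x+\phi$, since this is where the formal definition of "consistent set of formulas" via an enclosing consistent theory (introduced just before the statement) must be unwound. No Lindenbaum-style extension argument is required, and no use of any of the more delicate relations $<^{Size}$ or $<^{Size}_{d_{\Box}}$ is needed at this stage.
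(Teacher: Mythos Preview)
Your argument is correct. The paper does not supply its own proof of this lemma; it simply refers the reader to \cite{balbianietal:2008} (Lemmas~4.11 and~4.12), so there is no in-paper argument to compare against. Your reduction via Lemma~\ref{lem_7} to $\bot\notin x+\phi$, followed by the unfolding $\bot\in x+\phi \Leftrightarrow \phi\rightarrow\bot\in x \Leftrightarrow \neg\phi\in x$ using $(A0)$ and closure under $(R0)$, is exactly the standard route and is what one would expect the cited proof to contain.
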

\begin{lemma}\label{lem_9}
Each consistent theory can be extended to a maximal consistent theory.
\end{lemma}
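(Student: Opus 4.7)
The plan is a Lindenbaum-style construction, modified to cope with the infinitary rule $(R4)$. Since $Atm$ and $Agt$ are countable, $\mathcal{L}_{apal}$ is countable; enumerate its formulas as $\phi_{0},\phi_{1},\ldots$ Starting from the given consistent theory $x_{0}=x$, I build an ascending chain $x_{0}\subseteq x_{1}\subseteq\cdots$ of consistent theories. At stage $n$, given $x_{n}$ consistent, I consider $\phi_{n}$. If $x_{n}+\phi_{n}$ is inconsistent, then by Lemma~\ref{lem_7bis} we already have $\neg\phi_{n}\in x_{n}$, and I set $x_{n+1}=x_{n}$. If $x_{n}+\phi_{n}$ is consistent and $\phi_{n}$ does not have the shape $\neg\xi(\Box\phi)$ for any necessity form $\xi(\sharp)$ and formula $\phi$, then I just set $x_{n+1}=x_{n}+\phi_{n}$, which is a consistent theory by Lemmas~\ref{lem_7} and~\ref{lem_7bis}.

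The interesting case is when $\phi_{n}=\neg\xi(\Box\phi)$ and $x_{n}+\phi_{n}$ is consistent. Here I must also install a \emph{witness} for the infinitary rule. I claim that some epistemic $\psi$ exists with $x_{n}+\phi_{n}+\neg\xi([\psi]\phi)$ consistent, and I then let $x_{n+1}$ be this extension (again a theory, by two applications of Lemma~\ref{lem_7}). To see the claim, suppose on the contrary that for every epistemic $\psi$ the set $x_{n}+\phi_{n}+\neg\xi([\psi]\phi)$ were inconsistent. By Lemma~\ref{lem_7bis}, $\neg\neg\xi([\psi]\phi)\in x_{n}+\phi_{n}$ for every epistemic $\psi$, and by closure of the theory $x_{n}+\phi_{n}$ under $(R0)$ together with the tautology $\neg\neg\chi\rightarrow\chi$ in $APAL$, this gives $\xi([\psi]\phi)\in x_{n}+\phi_{n}$ for every epistemic $\psi$. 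Closure of $x_{n}+\phi_{n}$ under $(R4)$ then yields $\xi(\Box\phi)\in x_{n}+\phi_{n}$, contradicting the fact that $\phi_{n}=\neg\xi(\Box\phi)\in x_{n}+\phi_{n}$ together with consistency.

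Now let $y=\bigcup_{n}x_{n}$. Consistency is immediate, since $\bot\notin x_{n}$ for all $n$. Maximality follows because each $\phi_{n}$ is resolved at stage $n$: either $\phi_{n}\in x_{n+1}\subseteq y$, or $\neg\phi_{n}\in x_{n}\subseteq y$. Closure of $y$ under $(R0)$ is routine: any two formulas of $y$ lie together in some $x_{m}$, which is a theory. The delicate step is closure under $(R4)$. Suppose $\xi([\psi]\phi)\in y$ for every epistemic $\psi$, yet $\xi(\Box\phi)\notin y$. By maximality, $\neg\xi(\Box\phi)\in y$; let $n$ be the stage at which this formula was resolved. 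Since $\neg\xi(\Box\phi)\in y$ we were in the consistent case at stage $n$, so by construction I added some witness $\neg\xi([\psi]\phi)\in x_{n+1}\subseteq y$, contradicting the assumption via consistency of $y$.

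The only real obstacle is closure under the infinitary rule $(R4)$; a textbook Lindenbaum argument by itself does not secure it, and the witness step is exactly the ingredient that makes the limit $y$ closed under $(R4)$.
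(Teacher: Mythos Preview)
The paper does not actually prove Lemma~\ref{lem_9}; it simply refers the reader to \cite{balbianietal:2008} (Lemmas~4.11 and~4.12). Your argument is the standard Lindenbaum construction augmented with witnesses for the infinitary rule $(R4)$, and this is precisely the method used in the cited reference, so in that sense your proof matches what the paper intends.

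The proof is correct. A couple of small points you might make explicit: (i) $APAL\subseteq x_{0}\subseteq y$, so $y$ contains $APAL$ and is therefore a theory once you have verified closure under $(R0)$ and $(R4)$; (ii) in the $(R4)$-closure step, when you say ``let $n$ be the stage at which this formula was resolved'', you are using that the formula $\neg\xi(\Box\phi)$ equals $\phi_{n}$ for some $n$, and that the decomposition of a formula as $\neg\xi(\Box\phi)$ is unique (so the witness added at stage $n$ really is $\neg\xi([\psi]\phi)$ for \emph{this} $\xi$ and $\phi$). Uniqueness follows by a routine induction on necessity forms, but it is worth a sentence. With these clarifications the argument is complete.
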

The proof of the next lemma uses axioms $(A4)$--$(A6)$.
\begin{lemma}\label{equivalence}
Let $a\in Agt$.
For all maximal consistent theories $x,y,z$,
\begin{itemize}
\item $K_{a}x\subseteq x$,
\item if $K_{a}x\subseteq y$ and $K_{a}y\subseteq z$, then $K_{a}x\subseteq z$,
\item if $K_{a}x\subseteq y$, then $K_{a}y\subseteq x$.
\end{itemize}
\end{lemma}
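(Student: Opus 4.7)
The plan is to recognise the three statements as the reflexivity, transitivity and Euclidean property of the canonical accessibility relation defined by $x R_{a} y$ iff $K_{a}x\subseteq y$, corresponding exactly to the three $S5$-style axioms $(A4)$, $(A5)$, $(A6)$. Throughout I will exploit that every theory contains $APAL$ and is closed under $(R0)$, so any theorem of the logic may freely be combined with formulas already in the theory via modus ponens.

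For $K_{a}x\subseteq x$, I take $\phi\in K_{a}x$, so $K_{a}\phi\in x$; axiom $(A4)$ makes $K_{a}\phi\rightarrow\phi$ a theorem, hence an element of $x$, and closure under $(R0)$ yields $\phi\in x$. For transitivity, given $\phi\in K_{a}x$, i.e.\ $K_{a}\phi\in x$, axiom $(A5)$ gives $K_{a}K_{a}\phi\in x$, so $K_{a}\phi\in K_{a}x\subseteq y$, and therefore $\phi\in K_{a}y\subseteq z$.

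The Euclidean property is the only step with any real content. I would argue by contradiction: assume $K_{a}x\subseteq y$ and $\phi\in K_{a}y$ but $\phi\notin x$. Maximality of $x$ gives $\neg\phi\in x$, and instantiating $(A6)$ with $\neg\phi$ yields $K_{a}\hat{K}_{a}\neg\phi\in x$, so $\hat{K}_{a}\neg\phi\in K_{a}x\subseteq y$. Unfolding the abbreviation, this says $\neg K_{a}\neg\neg\phi\in y$, hence $K_{a}\neg\neg\phi\notin y$ by maximality. On the other hand, $\phi\in K_{a}y$ means $K_{a}\phi\in y$, and the theorem $K_{a}\phi\rightarrow K_{a}\neg\neg\phi$ (obtained from the tautology $\phi\rightarrow\neg\neg\phi$ via $(R1)$ and $(A1)$) together with $(R0)$ forces $K_{a}\neg\neg\phi\in y$, contradicting the consistency of $y$.

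The only obstacle I anticipate is the bookkeeping around the abbreviation $\hat{K}_{a}=\neg K_{a}\neg$ in this last case: one must manage the double negation carefully so that the clash between $\hat{K}_{a}\neg\phi\in y$ and $K_{a}\phi\in y$ is derived purely inside the axiom system rather than claimed semantically. Beyond this, the argument is just routine propositional manipulation on maximal consistent theories, using nothing more than $(A4)$--$(A6)$ as the proof annotation suggests.
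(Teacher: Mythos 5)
Your proof is correct and is exactly the argument the paper intends: the paper omits the proof of this lemma, noting only that it ``uses axioms $(A4)$--$(A6)$'', and your three cases use precisely $(A4)$ for reflexivity, $(A5)$ for transitivity, and $(A6)$ for symmetry, with the double-negation bookkeeping (deriving $K_{a}\phi\rightarrow K_{a}\neg\neg\phi$ from $(A0)$, $(R1)$, $(A1)$ so that the clash in $y$ is purely syntactic) handled properly. The only quibble is terminological: the third property is symmetry (the B axiom) rather than the Euclidean property, but this does not affect the argument.
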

Next lemma is usually called ``Diamond Lemma''.
Its proof is very classical and uses Lemmas~\ref{lem_7}, \ref{lem_7bis} and~\ref{lem_9}.
\begin{lemma}\label{diamond_lemma}
Let $\phi$ be a formula and $a\in Agt$.
For all theories $x$, if $K_{a}\phi\not\in x$, then there exists a maximal consistent theory $y$ such that $K_{a}x\subseteq y$ and $\phi\not\in y$.
\end{lemma}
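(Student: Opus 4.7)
The plan is the standard Lindenbaum-style construction: exhibit a consistent theory that contains $K_{a}x$ and $\neg\phi$, and then blow it up to a maximal consistent theory using Lemma~\ref{lem_9}. Concretely, I would work with the set $K_{a}x+\neg\phi$.

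First I would observe that by Lemma~\ref{lem_7}, $K_{a}x$ is itself a theory, and then $K_{a}x+\neg\phi$ is a theory containing $K_{a}x$ and $\neg\phi$ by a second application of Lemma~\ref{lem_7}. To apply Lemma~\ref{lem_9} I need this theory to be consistent, and by Lemma~\ref{lem_7bis} this amounts to showing $\neg\neg\phi\not\in K_{a}x$, i.e.\ $K_{a}\neg\neg\phi\not\in x$. This is where I use the hypothesis $K_{a}\phi\not\in x$: since $\neg\neg\phi\imp\phi$ is a propositional tautology (axiom $(A0)$), rule $(R1)$ gives $K_{a}(\neg\neg\phi\imp\phi)\in APAL$, and one application of $(A1)$ together with $(R0)$ yields $K_{a}\neg\neg\phi\imp K_{a}\phi\in APAL\subseteq x$. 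Since $x$ is closed under $(R0)$, the presence of $K_{a}\neg\neg\phi$ in $x$ would force $K_{a}\phi\in x$, contradicting the assumption. Hence $K_{a}\neg\neg\phi\not\in x$, and therefore $K_{a}x+\neg\phi$ is consistent.

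Now Lemma~\ref{lem_9} yields a maximal consistent theory $y$ with $K_{a}x+\neg\phi\subseteq y$. Then $K_{a}x\subseteq y$ is immediate, and from $\neg\phi\in y$ together with the standard property of maximal consistent theories ($\neg\phi\in y$ iff $\phi\not\in y$) I conclude $\phi\not\in y$, as required.

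The only nonroutine point is the consistency check, and even this is a two-line calculation once one notices that the relevant translation is $K_{a}\phi\not\in x\;\Longleftrightarrow\;\neg\neg\phi\not\in K_{a}x$, obtained from the theoremhood of $K_{a}\neg\neg\phi\imp K_{a}\phi$. Everything else is a direct invocation of the preceding lemmas, and no use of the infinitary rule $(R4)$ or of the announcement/$\Box$ machinery is needed, which is why the argument is ``very classical'', as advertised.
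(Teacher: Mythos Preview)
Your proof is correct and follows exactly the route the paper indicates: it is the standard Lindenbaum construction using Lemmas~\ref{lem_7}, \ref{lem_7bis} and~\ref{lem_9}, with the only nontrivial step being the consistency of $K_{a}x+\neg\phi$, which you handle cleanly via the theorem $K_{a}\neg\neg\phi\rightarrow K_{a}\phi$.
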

%
%
%
%
%
%
%
The next three lemmas were not found in \cite{balbianietal:2008}.
\begin{lemma}\label{lem_mcs}
Let $\phi$ be a formula.
For all maximal consistent theories $x$, if $\phi\in x$, then $\lbrack\phi\rbrack x$ is a maximal consistent theory.
\end{lemma}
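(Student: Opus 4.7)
The plan is to combine Lemma~\ref{lem_7} (which already gives that $[\phi]x$ is a theory) with a consistency argument based on axiom $(A8)$ and a maximality argument based on axiom $(A9)$. The hypothesis $\phi \in x$ will be the bridge that lets us push the maximality and consistency properties of $x$ through the operator $[\phi]$.

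First, by Lemma~\ref{lem_7}, $[\phi]x$ is a theory, so we only need to verify consistency and maximality. For consistency, suppose for contradiction that $\bot \in [\phi]x$. By the definition of $[\phi]x$ this means $[\phi]\bot \in x$. Axiom $(A8)$, namely $[\phi]\bot \leftrightarrow \neg\phi$, is in $APAL$ and hence in $x$; closure of $x$ under $(R0)$ then gives $\neg\phi \in x$. This contradicts the assumption $\phi \in x$ together with the consistency of $x$.

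For maximality, let $\psi$ be an arbitrary formula; I need to show $\psi \in [\phi]x$ or $\neg\psi \in [\phi]x$, i.e.\ $[\phi]\psi \in x$ or $[\phi]\neg\psi \in x$. Since $x$ is a maximal consistent theory, either $[\phi]\psi \in x$ (in which case we are done immediately) or $\neg[\phi]\psi \in x$. In the latter case, because $\neg[\phi]\psi \rightarrow (\phi \rightarrow \neg[\phi]\psi)$ is a propositional tautology, axiom $(A0)$ places it in $x$, and closure under $(R0)$ yields $\phi \rightarrow \neg[\phi]\psi \in x$. Now axiom $(A9)$ gives $[\phi]\neg\psi \leftrightarrow (\phi \rightarrow \neg[\phi]\psi)$ in $x$, and another application of $(R0)$ delivers $[\phi]\neg\psi \in x$, as required.

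No step is really an obstacle here; the whole proof is a short syntactic manipulation using the reduction axioms $(A8)$ and $(A9)$ together with the fact that theories are closed under modus ponens. The hypothesis $\phi \in x$ is used only in the consistency part; maximality follows from maximality of $x$ alone together with $(A9)$.
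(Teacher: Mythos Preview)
Your proof is correct. The consistency argument is identical to the paper's. For maximality, however, you take a different route: from $\neg[\phi]\psi \in x$ you pass via the tautology $\neg[\phi]\psi \rightarrow (\phi \rightarrow \neg[\phi]\psi)$ and axiom $(A9)$ to conclude $[\phi]\neg\psi \in x$. The paper instead argues by contradiction using $(A10)$: if both $[\phi]\psi$ and $[\phi]\neg\psi$ fail to be in $x$, then $\neg[\phi](\psi \vee \neg\psi) \in x$, contradicting that $[\phi](\psi \vee \neg\psi)$ is a theorem (via $(R2)$ applied to the tautology $\psi \vee \neg\psi$). Your route is arguably more direct, and as you observe it shows that maximality of $[\phi]x$ does not depend on the hypothesis $\phi \in x$---a point the paper's argument does not isolate. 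You also make explicit the appeal to Lemma~\ref{lem_7} for the theory part, which the paper leaves implicit.
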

\begin{proof}
Suppose $\phi\in x$.
If $\lbrack\phi\rbrack x$ is not consistent, then $\bot\in\lbrack\phi\rbrack x$.
Hence, $\lbrack\phi\rbrack\bot\in x$.
Thus, $\neg\phi\in x$.
Since $x$ is consistent, $\phi\not\in x$: a contradiction.
If $\lbrack\phi\rbrack x$ is not maximal, then there exists a formula $\psi$ such that $\psi\not\in\lbrack\phi\rbrack x$ and $\neg\psi\not\in\lbrack\phi\rbrack x$.
Therefore, $\lbrack\phi\rbrack\psi\not\in x$ and $\lbrack\phi\rbrack\neg\psi\not\in x$.
Since $x$ is maximal, $\neg\lbrack\phi\rbrack\psi\in x$ and $\neg\lbrack\phi\rbrack\neg\psi\in x$.
Consequently, $\neg(\lbrack\phi\rbrack\psi\vee\lbrack\phi\rbrack\neg\psi)\in x$.
Hence, using $(A10)$, $\neg\lbrack\phi\rbrack(\psi\vee\neg\psi)\in x$.
Since $x$ is consistent, $\lbrack\phi\rbrack(\psi\vee\neg\psi)\not\in x$.
Since $\psi\vee\neg\psi\in APAL$, $\lbrack\phi\rbrack(\psi\vee\neg\psi)\in APAL$.
Thus, $\lbrack\phi\rbrack(\psi\vee\neg\psi)\in x$: a contradiction.
\end{proof}
\begin{lemma}\label{lem_in_diamond}
Let $\phi,\psi$ be formulas.
For all maximal consistent theories $x$, $\langle\phi\rangle\psi\in x$ iff $\phi\in x$ and $\psi\in\lbrack\phi\rbrack x$.
\end{lemma}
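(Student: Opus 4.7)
The plan is to unfold the abbreviation $\langle\phi\rangle\psi$, defined as $\neg[\phi]\neg\psi$, and rewrite it into a convenient equivalent form using axiom $(A9)$. Recall that $(A9)$ reads $[\phi]\neg\psi\leftrightarrow(\phi\to\neg[\phi]\psi)$. Contraposing and cleaning up the Booleans, this yields the theorem $\langle\phi\rangle\psi \leftrightarrow (\phi \wedge [\phi]\psi)$, obtained purely by $(A0)$ and $(R0)$.

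Once this equivalence is in hand, the lemma reduces to moving between syntactic membership in a maximal consistent theory and its boolean closure properties. Since $x$ is a maximal consistent theory, it is closed under $(R0)$ and contains all theorems, so $\langle\phi\rangle\psi \in x$ iff $\phi\wedge[\phi]\psi \in x$. Using the standard properties of maximal consistent theories recalled just before Definition of $x+\phi$ (that negation and disjunction behave as expected, whence so does conjunction), this holds iff $\phi \in x$ and $[\phi]\psi \in x$. Finally, by the very definition $[\phi]x = \{\psi : [\phi]\psi \in x\}$, the condition $[\phi]\psi \in x$ is nothing but $\psi \in [\phi]x$, which is exactly the right-hand side of the biconditional.

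There is no real obstacle: the proof is a chain of biconditionals driven by a single use of $(A9)$ and the Boolean behaviour of maximal consistent theories. The only point requiring a bit of care is making explicit that the right-to-left direction also uses $(A9)$ in the same propositional package, so that assuming $\phi \in x$ and $[\phi]\psi \in x$ one concludes $\neg[\phi]\neg\psi \in x$ rather than merely its consistency with $x$; this is guaranteed because $x$ is maximal, so $\neg[\phi]\neg\psi \in x$ iff $[\phi]\neg\psi \notin x$, and the latter is forced by $(A9)$ together with $\phi \in x$ and $[\phi]\psi \in x$.
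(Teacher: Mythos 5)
Your proof is correct, but it takes a genuinely different route from the one in the paper. The paper argues both directions by contradiction and leans on two ingredients: axiom $(A8)$ (to extract $\phi\in x$ from $\langle\phi\rangle\top\in x$ in the left-to-right direction) and Lemma~\ref{lem_mcs}, which says that $\lbrack\phi\rbrack x$ is itself a maximal consistent theory whenever $\phi\in x$; the maximality and consistency of $\lbrack\phi\rbrack x$ are then used to toggle between $\psi\in\lbrack\phi\rbrack x$ and $\neg\psi\in\lbrack\phi\rbrack x$. You instead stay entirely inside $x$: a single application of $(A9)$ plus propositional reasoning gives the theorem $\langle\phi\rangle\psi\leftrightarrow(\phi\wedge\lbrack\phi\rbrack\psi)$, after which the claim is just the Boolean closure properties of the maximal consistent theory $x$ together with the definition $\lbrack\phi\rbrack x=\{\psi:\lbrack\phi\rbrack\psi\in x\}$. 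Your version is arguably more economical --- it does not need Lemma~\ref{lem_mcs} at all, and it packages both directions into one chain of biconditionals --- while the paper's version has the side benefit of exercising the structure of the derived theory $\lbrack\phi\rbrack x$, which is the object the canonical-model construction actually manipulates later. Both derivations are sound; the equivalence $\langle\phi\rangle\psi\leftrightarrow(\phi\wedge\lbrack\phi\rbrack\psi)$ you extract from $(A9)$ is a correct propositional consequence, and your remark about the right-to-left direction (using maximality of $x$ to convert $\lbrack\phi\rbrack\neg\psi\notin x$ into $\neg\lbrack\phi\rbrack\neg\psi\in x$) closes the only point where a gap could have appeared.
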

\begin{proof}
$(\Rightarrow)$ Suppose $\langle\phi\rangle\psi\in x$.
Hence, $\langle\phi\rangle\top\in x$.
Thus, using $(A8)$, $\phi\in x$.
By Lemma~\ref{lem_mcs}, $\lbrack\phi\rbrack x$ is a maximal consistent theory.
Suppose $\psi\not\in\lbrack\phi\rbrack x$.
Since $\lbrack\phi\rbrack x$ is maximal, $\neg\psi\in\lbrack\phi\rbrack x$.
Therefore, $\lbrack\phi\rbrack\neg\psi\in x$.
Consequently, $\neg\langle\phi\rangle\psi\in x$.
Since $x$ is consistent, $\langle\phi\rangle\psi\not\in x$: a contradiction.
\\
$(\Leftarrow)$ Suppose $\phi\in x$ and $\psi\in\lbrack\phi\rbrack x$.
By Lemma~\ref{lem_mcs}, $\lbrack\phi\rbrack x$ is a maximal consistent theory.
Suppose $\langle\phi\rangle\psi\not\in x$.
Since $x$ is maximal, $\neg\langle\phi\rangle\psi\in x$.
Hence, $\lbrack\phi\rbrack\neg\psi\in x$.
Thus, $\neg\psi\in\lbrack\phi\rbrack x$.
Since $\lbrack\phi\rbrack x$ is consistent, $\psi\not\in\lbrack\phi\rbrack x$: a contradiction.
\end{proof}
\begin{lemma}\label{commutatif}
Let $\phi$ be a formula and $a\in Agt$.
For all theories $x$, if $\phi\in x$, then $K_{a}\lbrack\phi\rbrack x=\lbrack\phi\rbrack K_{a}x$.
\end{lemma}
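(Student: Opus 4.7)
The plan is to unfold the two nested operator definitions so that the claimed set equality becomes a claim about membership of a single formula in $x$. By definition, $\psi\in K_{a}\lbrack\phi\rbrack x$ iff $K_{a}\psi\in\lbrack\phi\rbrack x$ iff $\lbrack\phi\rbrack K_{a}\psi\in x$, while $\psi\in\lbrack\phi\rbrack K_{a}x$ iff $\lbrack\phi\rbrack\psi\in K_{a}x$ iff $K_{a}\lbrack\phi\rbrack\psi\in x$. Hence the lemma reduces to: for every formula $\psi$, $\lbrack\phi\rbrack K_{a}\psi\in x$ iff $K_{a}\lbrack\phi\rbrack\psi\in x$.

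The obvious tool is axiom $(A11)$: $\lbrack\phi\rbrack K_{a}\psi\leftrightarrow(\phi\rightarrow K_{a}\lbrack\phi\rbrack\psi)$. Since $x$ contains $APAL$, it contains this biconditional, and since $x$ is closed under $(R0)$, it is closed under modus ponens on its members. This, together with the hypothesis $\phi\in x$, is exactly what lets us shuttle between $\lbrack\phi\rbrack K_{a}\psi$ and $K_{a}\lbrack\phi\rbrack\psi$ inside $x$.

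For the left-to-right direction I would argue: assume $\lbrack\phi\rbrack K_{a}\psi\in x$; by $(A11)$ and $(R0)$, $\phi\rightarrow K_{a}\lbrack\phi\rbrack\psi\in x$; since $\phi\in x$, another application of $(R0)$ yields $K_{a}\lbrack\phi\rbrack\psi\in x$. For the converse, assume $K_{a}\lbrack\phi\rbrack\psi\in x$; the propositional tautology $K_{a}\lbrack\phi\rbrack\psi\rightarrow(\phi\rightarrow K_{a}\lbrack\phi\rbrack\psi)$ is in $APAL\subseteq x$, so by $(R0)$, $\phi\rightarrow K_{a}\lbrack\phi\rbrack\psi\in x$; applying $(A11)$ and $(R0)$ then gives $\lbrack\phi\rbrack K_{a}\psi\in x$.

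There is really no genuine obstacle here; the proof is a short axiom chase, and the hypothesis $\phi\in x$ is used exactly once, to discharge the antecedent of the right-hand side of $(A11)$. The only thing to be careful about is to notice that the converse direction does \emph{not} need $\phi\in x$ at all, so the lemma's hypothesis is asymmetric in its use.
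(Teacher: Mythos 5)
Your proof is correct and follows essentially the same route as the paper: the paper's proof is exactly the chain of equivalences $\psi\in K_{a}\lbrack\phi\rbrack x$ iff $K_{a}\psi\in\lbrack\phi\rbrack x$ iff $\lbrack\phi\rbrack K_{a}\psi\in x$ iff $\phi\rightarrow K_{a}\lbrack\phi\rbrack\psi\in x$ iff $K_{a}\lbrack\phi\rbrack\psi\in x$ iff $\lbrack\phi\rbrack\psi\in K_{a}x$ iff $\psi\in\lbrack\phi\rbrack K_{a}x$, resting on $(A11)$, $(R0)$ and the hypothesis $\phi\in x$ just as you describe. Your additional observation about the asymmetric use of $\phi\in x$ is accurate but not needed for the statement.
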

\begin{proof}
Suppose $\phi\in x$.
For all formulas $\psi$, the reader may easily verify that the following conditions are equivalent:
\begin{enumerate}
\item $\psi\in K_{a}\lbrack\phi\rbrack x$,
\item $K_{a}\psi\in\lbrack\phi\rbrack x$,
\item $\lbrack\phi\rbrack K_{a}\psi\in x$,
\item $\phi\rightarrow K_{a}\lbrack\phi\rbrack\psi\in x$,
\item $K_{a}\lbrack\phi\rbrack\psi\in x$,
\item $\lbrack\phi\rbrack\psi\in K_{a}x$,
\item $\psi\in\lbrack\phi\rbrack K_{a}x$.
\end{enumerate}
\end{proof}
\begin{definition}[Canonical model]
The {\em canonical model} $\mathcal{M}^{c}=(W^{c},R^{c},V^{c})$ is defined as follows:
\begin{itemize}
\item $W^{c}$ is the set of all maximal consistent theories;
\item $R^{c}$ is the function assigning to each agent $a$ the binary relation $R^{c}_a$ on $W^{c}$ defined as \[ xR^{c}_a y \text{ iff } K_{a}x\subseteq y; \]
\item $V^{c}$ is the function assigning to each atom $p$ the subset $V^{c}(p)$ of $W^{c}$ defined as \[ x\in V^{c}(p) \text{ iff } p\in x. \]
\end{itemize}
\end{definition}
It will be clear that the canonical model is a model according to Definition~\ref{def.model}.  By Lemma~\ref{lem_9}, $W^{c}$ is a non-empty set, and by Lemma~\ref{equivalence} the binary relation $R^{c}(a)$ is an equivalence relation on $W^{c}$ for each agent $a$.
\section{Completeness}
%
The main result of this Section is the proof of $APAL$'s Truth Lemma (Lemma~\ref{equ_sem}).
This proof is different from and simpler than the proof presented in \cite{balbianietal:2008}.

\begin{definition}
Let $\phi$ be a formula. Condition $P(\phi)$ is defined as follows.  \begin{quote} For all maximal consistent theories $x$,  $\phi\in x$ iff $x\in \ \parallel\phi\parallel^{\mathcal{M}^{c}}$. \end{quote}
Condition $H(\phi)$ is defined as follows. \begin{quote} For all formulas $\psi$, if $\psi<_{d_{\Box}}^{Size}\phi$, then $P(\psi)$. \end{quote}
\end{definition}

Our new proof of $APAL$'s Truth Lemma is done by using an $<_{d_{\Box}}^{Size}$-induction on formulas.
More precisely, we will demonstrate that

%
%
%
%
%
%
%
%
%
%
%
%
%
%
%
\begin{lemma}\label{induction_bb}
For all formulas $\phi$, if $H(\phi)$, then $P(\phi)$.
\end{lemma}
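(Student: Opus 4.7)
The plan is to prove $P(\phi)$ by case analysis on the principal connective of $\phi$, invoking the inductive hypothesis $H(\phi)$ at every step to obtain $P(\psi)$ for the simpler formulas $\psi$ catalogued in the Corollary. Throughout I rely on soundness (so that each axiom $(A0)$--$(A13)$ is a validity, whence $\parallel\phi\parallel^{\mathcal{M}^{c}} = \parallel\phi'\parallel^{\mathcal{M}^{c}}$ whenever $\vdash \phi \leftrightarrow \phi'$), and on the closure and maximality properties of maximal consistent theories established in earlier lemmas.

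The atomic cases $\phi = p$ and $\phi = \bot$ are immediate from the definitions of $V^{c}$ and of consistency. The Boolean cases $\phi = \neg\psi$ and $\phi = \psi \vee \chi$ follow directly from $H(\phi)$ together with the characterisations of $\neg$ and $\vee$ in maximal consistent theories. For $\phi = K_{a}\psi$: one direction uses the definition of $R^{c}_{a}$ (if $K_{a}\psi \in x$, then $\psi \in y$ whenever $xR^{c}_{a}y$) together with $P(\psi)$; the converse is obtained contrapositively via Lemma~\ref{diamond_lemma}, which supplies a witness $y$ with $xR^{c}_{a}y$ and $\psi \notin y$.

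The bulk of the work lies in $\phi = [\psi]\chi$, which I split by the shape of $\chi$. In each of the sub-cases $\chi = p$, $\bot$, $\neg\eta$, $\eta \vee \theta$, $K_{a}\eta$, $[\eta]\theta$, the corresponding reduction axiom $(A7)$--$(A12)$ supplies an $APAL$-provably equivalent formula $\phi'$ whose relevant constituents all lie strictly below $\phi$ in $<_{d_{\Box}}^{Size}$---exactly the inequalities tabulated in the Corollary. By $(R0)$-closure and maximality, $\phi \in x$ iff $\phi' \in x$; by soundness, $x \in \parallel\phi\parallel^{\mathcal{M}^{c}}$ iff $x \in \parallel\phi'\parallel^{\mathcal{M}^{c}}$; and the right-hand sides match via the already-handled clauses for the Boolean connectives and $K_{a}$ applied to the sub-formulas of $\phi'$, each of which is available through $H(\phi)$.

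What remains genuinely delicate are $\phi = \Box\psi$ and the sub-case $\phi = [\psi]\Box\eta$, where no finite reduction is available and the infinitary rule $(R4)$ must be used. For $\phi = \Box\psi$: if $\Box\psi \in x$, axiom $(A13)$ gives $[\sigma]\psi \in x$ for every epistemic $\sigma$, and applying $H(\phi)$ via $(*)$ yields $x \in \parallel[\sigma]\psi\parallel^{\mathcal{M}^{c}}$ for every such $\sigma$, hence $x \in \parallel\Box\psi\parallel^{\mathcal{M}^{c}}$; if $\Box\psi \notin x$, closure of $x$ under $(R4)$ with the trivial necessity form $\xi(\sharp) = \sharp$ produces an epistemic $\sigma$ with $[\sigma]\psi \notin x$, and $H(\phi)$ finishes contrapositively. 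The sub-case $\phi = [\psi]\Box\eta$ uses the same strategy with the necessity form $\xi(\sharp) = [\psi]\sharp$, invoking $H(\phi)$ via $(**)$; the one extra ingredient is the semantic equivalence $x \in \parallel[\psi]\Box\eta\parallel^{\mathcal{M}^{c}}$ iff $x \in \parallel[\psi][\sigma]\eta\parallel^{\mathcal{M}^{c}}$ for every epistemic $\sigma$, obtained by unfolding the semantics of $\Box$ inside the restricted model $(\mathcal{M}^{c})^{\psi}$. I expect precisely this synchronisation between $(R4)$, the monotonicity of $[\psi]$ licensed by $(A13)$, and the $<_{d_{\Box}}^{Size}$-bookkeeping to be the main obstacle, though it is exactly the scenario the complexity measure was designed to accommodate.
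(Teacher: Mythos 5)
Your proposal is correct and follows essentially the same route as the paper's proof: the same case analysis on the main connective (with $[\psi]\chi$ further split by the shape of $\chi$), the same use of the reduction axioms $(A7)$--$(A12)$ together with the $<_{d_{\Box}}^{Size}$ inequalities catalogued in the Corollary, the Diamond Lemma for $K_{a}\psi$, and the same treatment of the two delicate cases $\Box\psi$ and $[\psi]\Box\eta$ via $(A13)$ in one direction and $(R4)$-closure (with necessity forms $\sharp$ and $[\psi]\sharp$) in the other. The only, immaterial, difference is presentational: you transfer truth sets across provable equivalences by appealing to soundness, whereas the paper unfolds the semantic clauses of the canonical model directly.
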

\begin{proof}
Suppose $H(\phi)$.
Let $x$ be a maximal consistent theory.
We consider the following $13$ cases.

\medskip \noindent 
{\bf Case $\phi=p$.}
\hans{Chang\'e} $P(p)$ holds, as $p \in x$ iff $x \in \ \parallel p \parallel^{\mathcal{M}^{c}}$, by the definition of the canonical model and the semantics of propositional atoms. 

\medskip  \noindent 
{\bf Case $\phi=\bot$.}
$P(\bot)$ holds, as $\bot \not\in x$ and $x\not\in \ \parallel\bot\parallel^{\mathcal{M}^{c}}$, by the definition of the canonical model and the semantics of $\bot$. 

\medskip  \noindent 
{\bf Case $\phi=\neg\psi$.}
The reader may easily verify that the following conditions are equivalent.
The induction using $<^{Size}_{d_\Box}$ is used between step 2.\ and step 3. A similar inductive argument is also used in all following cases. 
\begin{enumerate}
\item $\neg\psi\in x$,
\item $\psi\not\in x$,
\item $x\not\in \ \parallel\psi\parallel^{\mathcal{M}^{c}}$,
\item $x\in \ \parallel\neg\psi\parallel^{\mathcal{M}^{c}}$.
\end{enumerate}
Hence, $\neg\psi\in x$ iff $x\in \ \parallel\neg\psi\parallel^{\mathcal{M}^{c}}$.

\medskip \noindent 
{\bf Case $\phi=\psi\vee\chi$.}
The reader may easily verify that the following conditions are equivalent:
\begin{enumerate}
\item $\psi\vee\chi\in x$,
\item $\psi\in x$, or $\chi\in x$,
\item $x\in \ \parallel\psi\parallel^{\mathcal{M}^{c}}$, or $x\in \ \parallel\chi\parallel^{\mathcal{M}^{c}}$,
\item $x\in \ \parallel\psi\vee\chi\parallel^{\mathcal{M}^{c}}$.
\end{enumerate}
Hence, $\psi\vee\chi\in x$ iff $x\in \ \parallel\psi\vee\chi\parallel^{\mathcal{M}^{c}}$.

\medskip \noindent 
{\bf Case $\phi=K_{a}\psi$.} The reader may easily verify that the following conditions are equivalent.
The implication from step 2.\ to step 1.\ is by Lemma \ref{diamond_lemma}.
\begin{enumerate}
\item $K_{a}\psi\in x$,
\item for all maximal consistent theories $y$, if $K_{a}x\subseteq y$, then $\psi\in y$,
\item for all maximal consistent theories $y$, if $xR^{c}(a)y$, then $y\in \ \parallel\psi\parallel^{\mathcal{M}^{c}}$,
\item $x\in \ \parallel K_{a}\psi\parallel^{\mathcal{M}^{c}}$.
\end{enumerate}
Hence, $K_{a}\psi\in x$ iff $x\in \ \parallel K_{a}\psi\parallel^{\mathcal{M}^{c}}$.

\medskip


\medskip \noindent 
{\bf Case $\phi=\lbrack\psi\rbrack p$.}
The reader may easily verify that the following conditions are equivalent. Between step 1.\ and step 2., use axiom $(A7)$ $[\psi]p \leftrightarrow (\psi \rightarrow p)$, so that  $\lbrack\psi\rbrack p\in x$ iff  $\psi \rightarrow p\in x$ (similar justifications apply in the other cases of form $[\psi]\chi$).
\begin{enumerate}
\item $\lbrack\psi\rbrack p\in x$,
\item $\psi\not\in x$, or $p\in x$,
\item $x\not\in \ \parallel\psi\parallel^{\mathcal{M}^{c}}$, or $x\in \ \parallel p\parallel^{\mathcal{M}^{c}}$,
\item $x\in \ \parallel\lbrack\psi\rbrack p\parallel^{\mathcal{M}^{c}}$.
\end{enumerate}
Hence, $\lbrack\psi\rbrack p\in x$ iff $x\in \ \parallel\lbrack\psi\rbrack p\parallel^{\mathcal{M}^{c}}$.

\medskip \noindent 
{\bf Case $\phi=\lbrack\psi\rbrack\bot$.}
The reader may easily verify that the following conditions are equivalent:
\begin{enumerate}
\item $\lbrack\psi\rbrack\bot\in x$,
\item $\psi\not\in x$,
\item $x\not\in \ \parallel\psi\parallel^{\mathcal{M}^{c}}$,
\item $x\in \ \parallel\lbrack\psi\rbrack\bot\parallel^{\mathcal{M}^{c}}$.
\end{enumerate}
Hence, $\lbrack\psi\rbrack\bot\in x$ iff $x\in \ \parallel\lbrack\psi\rbrack\bot\parallel^{\mathcal{M}^{c}}$.

\medskip \noindent 
{\bf Case $\phi=\lbrack\psi\rbrack\neg\chi$.}
The reader may easily verify that the following conditions are equivalent. In the crucial equivalence between step 2.\ and 3.\ we use that $\neg\lbrack\psi\rbrack\chi <^{Size}_{d_\Box} \lbrack\psi\rbrack\neg\chi$, a consequence of Lemma \ref{lem_2} (the $d_\Box$ depth is the same for both formulas). 
\begin{enumerate}
\item $\lbrack\psi\rbrack\neg\chi\in x$,
\item $\psi\not\in x$, or $\neg\lbrack\psi\rbrack\chi\in x$,
\item $x\not\in \ \parallel\psi\parallel^{\mathcal{M}^{c}}$, or $x\in \ \parallel\neg\lbrack\psi\rbrack\chi\parallel^{\mathcal{M}^{c}}$,
\item $x\in \ \parallel\lbrack\psi\rbrack\neg\chi\parallel^{\mathcal{M}^{c}}$.
\end{enumerate}
Hence, $\lbrack\psi\rbrack\neg\chi\in x$ iff $x\in \ \parallel\lbrack\psi\rbrack\neg\chi\parallel^{\mathcal{M}^{c}}$.

\medskip \noindent 
{\bf Case $\phi=\lbrack\psi\rbrack(\chi\vee\theta)$.}
The reader may easily verify that the following conditions are equivalent:
\begin{enumerate}
\item $\lbrack\psi\rbrack(\chi\vee\theta)\in x$,
\item $\lbrack\psi\rbrack\chi\in x$, or $\lbrack\psi\rbrack\theta\in x$,
\item $x\in \ \parallel\lbrack\psi\rbrack\chi\parallel^{\mathcal{M}^{c}}$, or $x\in \ \parallel\lbrack\psi\rbrack\theta\parallel^{\mathcal{M}^{c}}$
\item $x\in \ \parallel\lbrack\psi\rbrack(\chi\vee\theta)\parallel^{\mathcal{M}^{c}}$.
\end{enumerate}
Hence, $\lbrack\psi\rbrack(\chi\vee\theta)\in x$ iff $x\in \ \parallel\lbrack\psi\rbrack(\chi\vee\theta)\parallel^{\mathcal{M}^{c}}$.

\medskip \noindent 
{\bf Case $\phi=\lbrack\psi\rbrack K_{a}\chi$.}
The reader may easily verify that the following conditions are equivalent (again, a crucial step is between 2.\ and 3.\, where we can use induction on $K_{a}\lbrack\psi\rbrack\chi$ because of Lemma \ref{lem_2}): 
\begin{enumerate}
\item $\lbrack\psi\rbrack K_{a}\chi\in x$,
\item $\psi\not\in x$, or $K_{a}\lbrack\psi\rbrack\chi\in x$,
\item $x\not\in \ \parallel\psi\parallel^{\mathcal{M}^{c}}$, or $x\in \ \parallel K_{a}\lbrack\psi\rbrack\chi\parallel^{\mathcal{M}^{c}}$,
\item $x\in \ \parallel\lbrack\psi\rbrack K_{a}\chi\parallel^{\mathcal{M}^{c}}$.
\end{enumerate}
Hence, $\lbrack\psi\rbrack K_{a}\chi\in x$ iff $x\in \ \parallel\lbrack\psi\rbrack K_{a}\chi\parallel^{\mathcal{M}^{c}}$.

\medskip \noindent 
{\bf Case $\phi=\lbrack\psi\rbrack\lbrack\chi\rbrack\theta$.}
The reader may easily verify that the following conditions are equivalent (and once more, a crucial step is between 2.\ and 3.\, where we use Lemma \ref{lem_2}): 
\begin{enumerate}
\item $\lbrack\psi\rbrack\lbrack\chi\rbrack\theta\in x$,
\item $\lbrack\neg\lbrack\psi\rbrack\neg\chi\rbrack\theta\in x$,
\item $x\in \ \parallel\lbrack\neg\lbrack\psi\rbrack\neg\chi\rbrack\theta\parallel^{\mathcal{M}^{c}}$,
\item $x\in \ \parallel\lbrack\psi\rbrack\lbrack\chi\rbrack\theta\parallel^{\mathcal{M}^{c}}$.
\end{enumerate}
Hence, $\lbrack\psi\rbrack\lbrack\chi\rbrack\theta\in x$ iff $x\in \ \parallel\lbrack\psi\rbrack\lbrack\chi\rbrack\theta\parallel^{\mathcal{M}^{c}}$.

\medskip \noindent 
{\bf Case $\phi=\lbrack\psi\rbrack\Box\chi$.}
The reader may easily verify that the following conditions are equivalent. Between 1.\ and 2., we use derivation rule $(R4)$ on the necessity form $\lbrack\psi\rbrack\lbrack\theta\rbrack\chi$ and closure of maximal consistent sets under $(R4)$. Between step 2.\ and step 3.\, we use the complexity measure $<^{Size}_{d_\Box}$, where we now simply observe that $\lbrack\psi\rbrack\Box\chi$ contains one $\Box$ less than $\lbrack\psi\rbrack\lbrack\theta\rbrack\chi$. Between step 3.\ and step 4., we use the semantics of arbitrary announcements $\Box$ and of announcements $[\psi]$: we note that  $x\in \  \parallel\lbrack\psi\rbrack\lbrack\theta\rbrack\chi\parallel^{\mathcal{M}^{c}}$ is by the semantics equivalent to: $x\in \  \parallel\psi\parallel^{\mathcal{M}^{c}}$ implies $x\in \  \parallel\lbrack\theta\rbrack\chi\parallel^{({\mathcal{M}^{c}})^\psi}$.
\begin{enumerate}
\item $\lbrack\psi\rbrack\Box\chi\in x$,
\item for all epistemic formulas $\theta$, $\lbrack\psi\rbrack\lbrack\theta\rbrack\chi\in x$,
\item for all epistemic formulas $\theta$, $x\in \ \parallel\lbrack\psi\rbrack\lbrack\theta\rbrack\chi\parallel^{\mathcal{M}^{c}}$,
\item $x\in \ \parallel \lbrack\psi\rbrack\Box\chi\parallel^{\mathcal{M}^{c}}$.
\end{enumerate}
Hence, $\lbrack\psi\rbrack\Box\chi\in x$ iff $x\in \ \parallel \lbrack\psi\rbrack\Box\chi\parallel^{\mathcal{M}^{c}}$.

\medskip \noindent 
{\bf Case $\phi=\Box\psi$.}
The reader may easily verify that the following conditions are equivalent.
The equivalence between step 2.\ and step 3.\ follows from the fact that for all epistemic formulas $\chi$, $\lbrack\chi\rbrack\psi<^{Size}_{d_{\Box}}\Box\psi$.
\begin{enumerate}
\item $\Box\psi\in x$,
\item for all epistemic formulas $\chi$, $\lbrack\chi\rbrack\psi\in x$,
\item for all epistemic formulas $\chi$, $x\in \ \parallel\lbrack\chi\rbrack\psi\parallel^{\mathcal{M}^{c}}$,
\item $x\in \ \parallel\Box\psi\parallel^{\mathcal{M}^{c}}$.
\end{enumerate}
Hence, $\Box\psi\in x$ iff $x\in \ \parallel\Box\psi\parallel^{\mathcal{M}^{c}}$.
\end{proof}
\begin{lemma}[Truth Lemma] \label{equ_sem}
Let $\phi$ be a formula.
For all maximal consistent theories $x$,
\begin{itemize}
\item $\phi\in x$ iff $x\in \ \parallel\phi\parallel^{\mathcal{M}^{c}}$.
\end{itemize}
\end{lemma}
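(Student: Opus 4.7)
The plan is to derive the Truth Lemma as an immediate application of well-founded induction on the relation $<_{d_{\Box}}^{Size}$, using Lemma~\ref{induction_bb} as the induction step. All the combinatorial case analysis has already been discharged inside Lemma~\ref{induction_bb}; what remains is a purely formal packaging of that lemma into a universal statement about maximal consistent theories.

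Concretely, I would first recall from Lemma~\ref{gather.lemma} that $<_{d_{\Box}}^{Size}$ is a well-founded strict partial order on $\mathcal{L}_{apal}$. Hence the principle of $<_{d_{\Box}}^{Size}$-induction is available: in order to establish $P(\phi)$ for all formulas $\phi$, it suffices to prove that for every $\phi$, whenever $P(\psi)$ holds for each $\psi <_{d_{\Box}}^{Size}\phi$, then $P(\phi)$ holds. But the antecedent of this implication is precisely the condition $H(\phi)$ as defined just before Lemma~\ref{induction_bb}, and the corresponding implication $H(\phi) \Rightarrow P(\phi)$ is exactly the content of Lemma~\ref{induction_bb}. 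Therefore $P(\phi)$ holds for every formula $\phi$, which, unfolding the definition of $P$, is the Truth Lemma: for every formula $\phi$ and every maximal consistent theory $x$, $\phi \in x$ iff $x \in \ \parallel\phi\parallel^{\mathcal{M}^{c}}$.

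The proof involves essentially no obstacle at this stage: the delicate points, namely the choice of the complexity measure $<_{d_{\Box}}^{Size}$ that makes every reduction in the thirteen cases (in particular the reduction $\lbrack\chi\rbrack\psi<_{d_{\Box}}^{Size}\Box\psi$ for epistemic $\chi$, and the reduction $\lbrack\neg\lbrack\psi\rbrack\neg\chi\rbrack\theta<_{d_{\Box}}^{Size}\lbrack\psi\rbrack\lbrack\chi\rbrack\theta$) actually decrease the complexity, have been absorbed into Lemma~\ref{gather.lemma}, Lemma~\ref{lem_2} and Lemma~\ref{induction_bb}. The only mild subtlety worth flagging is that the induction is on formulas ordered by $<_{d_{\Box}}^{Size}$ rather than on subformula complexity, which is required because the reductions used in the $\lbrack\cdot\rbrack$ and $\Box$ cases do not respect the subformula relation, only the weighted measure; this is precisely what the well-foundedness clause of Lemma~\ref{gather.lemma} was designed to secure.
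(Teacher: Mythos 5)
Your proof is correct and follows exactly the paper's own argument: the Truth Lemma is obtained by well-founded $<_{d_{\Box}}^{Size}$-induction, with Lemma~\ref{induction_bb} supplying the induction step $H(\phi)\Rightarrow P(\phi)$ and the well-foundedness clause of Lemma~\ref{gather.lemma} licensing the induction principle. Nothing further is needed.
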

\begin{proof}
By Lemma~\ref{induction_bb}, using the well-foundedness of the strict partial order $<^{Size}_{d_{\Box}}$ between formulas.
\end{proof}

\medskip

\noindent Now, we are ready to prove the completeness of $APAL$.
\begin{proposition}\label{pro_complete}
For all formulas $\phi$, if $\phi$ is valid, then $\phi\in APAL$.
\end{proposition}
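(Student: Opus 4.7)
The plan is the standard contrapositive argument that derives completeness from the Truth Lemma (Lemma~\ref{equ_sem}), combined with the Lindenbaum-style extension provided by Lemma~\ref{lem_9}. I would prove the contrapositive: if $\phi\notin APAL$, then $\phi$ is not valid, by exhibiting a point of the canonical model that refutes $\phi$.

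First, suppose $\phi\notin APAL$. Since $APAL$ is closed under $(R0)$ and contains all propositional tautologies $(A0)$, the equivalence $\neg\neg\phi\leftrightarrow\phi$ is in $APAL$, so $\neg\neg\phi\notin APAL$ as well. Viewing $APAL$ itself as the smallest theory, Lemma~\ref{lem_7bis} then tells us that $APAL+\neg\phi$ is a consistent theory, and by Lemma~\ref{lem_7} it contains $\neg\phi$. Next, using Lemma~\ref{lem_9}, I would extend $APAL+\neg\phi$ to a maximal consistent theory $x$. By construction $\neg\phi\in x$, and since $x$ is consistent this yields $\phi\notin x$.

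The Truth Lemma (Lemma~\ref{equ_sem}) applied to $\phi$ and $x$ now gives $x\notin\ \parallel\phi\parallel^{\mathcal{M}^{c}}$. Since $\mathcal{M}^{c}$ is a model in the sense of Definition~\ref{def.model} (its domain is nonempty by Lemma~\ref{lem_9}, each $R^{c}_{a}$ is an equivalence relation by Lemma~\ref{equivalence}, and $V^{c}$ is a legitimate valuation), this witnesses that $\phi$ is not valid. Contraposing, validity of $\phi$ implies $\phi\in APAL$, which is exactly the statement of the proposition.

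There is no real obstacle here: all the hard infinitary work has been absorbed into the Truth Lemma and into the closure properties of theories under $(R4)$. The only conceptual point worth emphasising is that maximal consistent theories, in the sense of this paper, are genuinely closed under $(R4)$ (unlike sets that are merely deductively closed under $(R0)$), which is what makes the canonical model behave correctly for the $\Box$ clause. Once this is in place, the completeness argument reduces to the familiar template: not-a-theorem $\Rightarrow$ consistently refutable $\Rightarrow$ satisfiably refutable in $\mathcal{M}^{c}$ $\Rightarrow$ not valid.
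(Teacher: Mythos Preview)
Your argument is correct and matches the paper's own proof essentially step for step: from $\phi\notin APAL$ you use Lemmas~\ref{lem_7}, \ref{lem_7bis} and~\ref{lem_9} to obtain a maximal consistent theory containing $\neg\phi$, and then the Truth Lemma to refute $\phi$ in $\mathcal{M}^{c}$. The only cosmetic difference is that the paper phrases it as a proof by contradiction rather than an explicit contrapositive, and applies the Truth Lemma to $\neg\phi$ rather than to $\phi$; your version spells out the use of Lemma~\ref{lem_7bis} (via $\neg\neg\phi\notin APAL$) a bit more explicitly.
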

\begin{proof}
Suppose $\phi$ is valid and $\phi\not\in APAL$.
By Lemmas~\ref{lem_7}, \ref{lem_7bis} and~\ref{lem_9}, there exists a maximal consistent theory $x$ containing $\neg\phi$.
By Lemma~\ref{equ_sem}, $x\in \ \parallel\neg\phi\parallel^{\mathcal{M}^{c}}$.
Thus, $x\not\in \ \parallel\phi\parallel^{\mathcal{M}^{c}}$.
Therefore, $\parallel\phi\parallel^{\mathcal{M}^{c}}\not=W^{c}$.
Consequently, $\phi$ is not valid: a contradiction.
\end{proof}
\section{Conclusion}
We have provided an alternative, simpler, completeness proof for the logic $APAL$. The proof is considered simpler, because in the crucial Truth Lemma we do not need to take finite sequences of announcements along. Instead, it can proceed by $<^{Size}_{d_{\Box}}$-induction on formulas.
We consider this result useful, as the completeness proofs of various other logics employing arbitrary announcements or other forms of quantifiying over announcements may thus also be simplified, and as it may encourage the developments of novel logics with quantification over announcements. We acknowledge useful discussions on the completeness of $APAL$ with Jie Fan, Wiebe van der Hoek, and Barteld Kooi.

\bibliographystyle{plain}
\bibliography{biblio2014}

\begin{thebibliography}{1}

\bibitem{philippe.corrected:2014}
P.~Balbiani.
\newblock A new proof of completeness for {APAL}.
\newblock Manuscript under submission, 2014.

\bibitem{balbianietal:2008}
P.~Balbiani, A.~Baltag, H.~van Ditmarsch, A.~Herzig, T.~Hoshi, and T.~De Lima.
\newblock `{K}nowable' as `known after an announcement'.
\newblock {\em Review of Symbolic Logic}, 1(3):305--334, 2008.

\bibitem{plaza:1989}
J.A. Plaza.
\newblock Logics of public communications.
\newblock In {\em Proc.\ of the 4th ISMIS}, pages 201--216. Oak Ridge National
  Laboratory, 1989.

\end{thebibliography}
\end{document}